\theoremstyle{plain}
\newtheorem{theorem}{Theorem} 
\newtheorem{lemma}{Lemma}
\theoremstyle{definition}
\newtheorem{definition}{Definition}
\newcommand{\hxy}{\text{H}(X|Y)}
\newcommand{\cF}{{\cal F}}
\newcommand{\eps}{\epsilon}
\newcommand{\Ceil}[1]{\ensuremath{\left\lceil#1\right\rceil}}
\long\def\symbolfootnote[#1]#2{\begingroup%
\def\thefootnote{\fnsymbol{footnote}}\footnotetext[#1]{#2}\endgroup}
\begin{document}

\title{Network Coded Gossip with Correlated Data}

%\iffalse
\author{\IEEEauthorblockN{Bernhard Haeupler\IEEEauthorrefmark{1}}
\IEEEauthorblockA{%CSAIL, RLE\\
MIT\\
haeupler@mit.edu}
\and
\IEEEauthorblockN{Asaf Cohen\IEEEauthorrefmark{1}}
\IEEEauthorblockA{%Department of Communication Systems Engineering\\
Ben-Gurion University\\% of the Negev, Israel\\
coasaf@cse.bgu.ac.il}
\and
\IEEEauthorblockN{Chen Avin}
\IEEEauthorblockA{%Department of Communication Systems Engineering\\
Ben-Gurion University\\% of the Negev, Israel\\
avin@cse.bgu.ac.il}
\and
\IEEEauthorblockN{Muriel M\'edard}
\IEEEauthorblockA{%RLE\\
MIT\\
medard@mit.edu}}

\maketitle

%create a star-footnote pointing out the equal contribution using a larger-than-standard fontsize.
\symbolfootnote[1]{\small A. Cohen and B. Haeupler contributed equally to this work.}

\begin{abstract}
We design and analyze gossip algorithms for networks with correlated data. In these networks, either the data to be distributed, the data already available at the nodes, or both, are correlated. % This model is applicable for a variety of modern networks, such as sensor, peer-to-peer and content distribution networks. 
Although coding schemes for correlated data have been studied extensively, the focus has been on characterizing the rate region in static memory-free networks. In a gossip-based scheme, however, nodes communicate among each other by continuously exchanging packets according to some underlying communication model. The main figure of merit in this setting is the stopping time -- the time required until nodes can successfully decode. While Gossip schemes are practical, distributed and scalable, they have only been studied for uncorrelated data. 

We wish to close this gap by providing techniques to analyze network coded gossip in (dynamic) networks with correlated data. We give a clean framework for oblivious network models that applies to a multitude of network and communication scenarios, specify a general setting for distributed correlated data, and give tight bounds on the stopping times of network coded protocols in this wide range of scenarios. 
\end{abstract}

\IEEEpeerreviewmaketitle
%%%%%%%%%%%%%%%%%%%%%%%%%%%%%%%%%%%%%%%%%%%%%%%%%%%%%%%%%%%%%%%%%%%%%%%%%%%%%%%%%
\section{Introduction}
In this paper, we design and analyze information dissemination algorithms in communication networks with correlated data. In these networks, either the data to be distributed, the data already available at the nodes, or both, are correlated. This problem arises in a many networking applications, such as sensor, peer-to-peer or content distribution networks. One such example is a large set of distributed temperature sensors with a clock at the receiver. Both the temperatures at different sensor locations and the time at which a measurement is taken have high correlations among each other. 

%Networks with correlated data are networks where either the data already available at the nodes (before the transmission) is correlated with the data to be sent, or where the data to be sent is distributed among the nodes as several correlated sources. 

While the current information theory literature includes several coding schemes for correlated data, the focus in these works is mainly on characterizing the rate region - the set of achievable rates. On the other hand, recent work in the networking literature offers a multitude of efficient, decentralized and address-oblivious schemes for information dissemination (e.g., randomized gossip). Unfortunately these schemes treat the data as uncorrelated and neglect any available information at the receivers. 
The focus of this paper is to close this gap and give tools for analyzing gossip-based algorithms in networks with correlated data. 

%%%%%%%%%%%%%%%%%%%%%%%%%%
\subsection{Related Work}

{\em Distributed Source Coding:}\  Distributed compression has been studied in information theory mainly through small, canonical problems. In \cite{SlepianWolf73}, Slepian and Wolf considered the problem of separately encoding two correlated sources and joint decoding. In \cite{AhlswedeKorner75} and \cite{CohenAvestEffros_ISIT09} the problem of compression with a rate-limited helper is considered. In \cite{Ho_et_al06}, Ho et al.\ considered the multicast problem with correlated sources which can be viewed as extending the Slepian-Wolf problem to arbitrary networks through \emph{network coding}. Further extensions appeared in \cite{Barros_Servetto06} and \cite{BakshiEffros08}. In all these studies the goal is to characterize the \emph{rate region} for (fixed) static and memory-free networks, that is, the set of required capacities needed for a multicast. 
 
\vspace{0.1cm}
{\em (Network Coded) Gossip:}\  Gossip schemes were first introduced in \cite{demers1987epidemic} as a simple and decentralized way to disseminate a piece of information in a network. A detailed analysis of a class of these algorithms is given in \cite{karp2000randomized}. In these schemes nodes communicate by continuously picking communication partners in a randomized fashion and then forwarding the information. The main figure of merit is the stopping time -- the time needed for all nodes to be informed. Such randomized gossip-based protocols are attractive due to their locality, simplicity, and structure-free nature, and have been offered in the literature for various tasks. For the task of distributing multiple messages \cite{Deb_Medard_Choute06} introduced \emph{algebraic gossip}, a network coding-based gossip protocol in which nodes exchange linear combinations of their available messages. This idea was extended to arbitrary networks in \cite{Mosk-Aoyamam_Shah06} and \cite{Borokhovich2010Tight}. Haeupler~\cite{haeupler2011analyzing} proved tight bounds for the stopping time of algebraic gossip for various models, including (adversarial) dynamically changing networks~\cite{NCdynamicnetworks11} and nodes with limited memory~\cite{NCgossipwithlimitedmemory}. Improved bounds for non-uniform gossip were given in \cite{DBLP:conf/podc/AvinBCL11}. The projection analysis developed in \cite{haeupler2011analyzing} will play a key role in this paper.

\subsection{Our Contributions}

To our knowledge this paper is the first to combine these two strains of research and analyze gossip based protocols in networks with correlated data. Our contributions are manifold: First, we give a clean and general framework for oblivious network models in \Cref{sec:obliviousnetworks} and define a setting for a correlated environment in \Cref{sec. model}. In this general setting we extend the projection analysis of \cite{haeupler2011analyzing} by making a connection between the coefficient vectors a node knows and the amount and type of information it has learned. This results in simple, direct and self-contained proofs of tight bounds on the stopping time in the canonical models of one source and side information at the receivers, as well as two correlated sources. In \Cref{sec:capacities} we then give results for the general scenario of multiple sources and side information. We do this by providing tight bounds on the time required for any set of (fractional) capacities to be induced by the (random) packet exchanges generated in an oblivious network model. This allows to transform results on the rate region of static memory-free networks (e.g., \cite{Ho_et_al06,BakshiEffros08}) into bounds on the stopping times of gossip-based algorithms. These capacity bounds are interesting in their own right and have the potential to be useful in other information dissemination problems. %They could also be used to prove the fault-tolerance and robustness of gossip networks based on establishing bounds on their connectivities.  
%%%%%%%%%%%%%%%%%%%%%%%%%%%%%%%%%%%%%%%%%%%%%%%%%%%%%%%%%%%%%%%%%%%%%%%%%%%%%%%%%

\section{Network and Information Model}\label{sec. model}
In this section we state the broadcast problem, define the communication model and the information model which  and state the information model defining the nature of the
source and side information. 

\subsubsection{Network and Communication Model}

For simplicity, we will assume that the network consists of a fixed set $V$ of $n$ nodes. Communication takes place in synchronous rounds. In each round, each node $v$ decides on a packet $p_v$ of $s$ bits to be sent out (possibly using randomness). Given the current state of the network, the network model then specifies (possibly using randomness) which packet will be received by which node in the current round. This corresponds to a probability distribution over directed edge sets where a directed edge $(v,u)$ means that the packet $p_v$ is successfully received by node $u$. Nodes are assumed to have unlimited storage (for schemes with limited buffers see~\cite{NCgossipwithlimitedmemory}). We denote the set of directed edges chosen for round $t$ with $E_t$ and call it the active edge set for time $t$. 

\subsubsection{Source and Side Information}

We assume $k$ messages are initially distributed over the network. The $i$-th message constitutes of $l$ i.i.d.\ samples from the random variable $X_i$, namely, a vector $x_i$. The message vectors $x_1,\ldots,x_k$ are initially distributed to nodes (i.e., sources) such that each vector is given to at least one node. We also assume that each node $v \in V$ (or terminal) has some side information. The side information $y_v$ at node $v \in V$ is drawn as $l$ i.i.d. samples from $Y_v$. Note that the variables $\{X_i\}_{i=1}^{k} \cup \{Y_v\}_{v \in V}$ are arbitrarily correlated according to some known memoryless joint distribution.
We are interested in the time when nodes are able to decode $x_1,\ldots,x_k$ based on their side information and the packets exchanged with other nodes. 

%In general, we assume that the probabilities $p(x_1, \ldots, x_k,y_v)$ are known to all nodes. However, note that in most coding schemes, fewer information is required. For example, in random binning encoding and joint typicality decoding, the sources use only the conditional entropies, while the terminals use the joint distribution. If minimum entropy decoding is used, then receivers need no knowledge about the probabilities~\cite{Ho04networkcoding}, making the coding scheme universal~\cite{csiszar1982linear}. 

%%%%%%%%%%%%%%%%%%%%%%%%%%%%%%%%%%%%%
\subsubsection{The Encoding and Decoding Schemes}

For a given field size $q$ and slack $\delta>0$ we assume throughout that nodes employ the following coding scheme:
Prior to communication, source nodes perform random binning, that is, for every $1 \leq i \leq k$ each node receiving the message vector $x_i$ applies the same random mapping into $2^{l(H(X_i)+\delta)}$ bins. 
%Comment Bernhard: I uncommented these remarks since binning with a rate of H(X) is essential for the proof of Theorem 3.
%Note that in certain cases binning at a higher or lower rate is beneficial. Assume now binning into $2^{2l\log q}$ bins. 
The resulting bin indices (which are the same for every node initialized with $x_i$) are interpreted as vectors of length $h = \Ceil{\frac{l}{\log q}(H(X_i)+\delta)}$ over the finite field $\cF_q$. These vectors are then split into $\frac{h \log q}{s}$ blocks of $\frac{s}{\log q}$ symbols in $\cF_q$ each, for a total of $s$ bits per block.

During the communication phase nodes sends out random linear combinations (over $F_q$) of these blocks as packets. To keep track of the linear combination contained in a packet one coefficient for each block of each message is introduced and sent in the header of each packet. As in all prior works on distributed network coding (e.g., \cite{Deb_Medard_Choute06,Mosk-Aoyamam_Shah06,Borokhovich2010Tight,haeupler2011analyzing,NCgossipwithlimitedmemory,DBLP:conf/podc/AvinBCL11}), we assume that $\frac{s}{\log q}$ is sufficiently large compared to the number of coefficients. This renders the overhead of the header negligible leaving a packet size of $s$ bits as desired. 

At each node, independent linear equations on the blocks are collected for decoding. We denote with $S_v$ the subspace spanned by all coefficient vectors received at node $v$. We also use the following notion of knowledge from \cite{haeupler2011analyzing}:

\begin{definition}\label{def:knowledge} A node knows a coefficient vector $\mu$ iff $S_v$ is not orthogonal to $\mu$, that is, there exists a vector $c \in S_v$ such that $\langle c,\mu \rangle \ne 0$. 
\end{definition}

Lastly, we will make use of the following lemma on random binning. 

\begin{lemma}[\cite{BakshiEffros08,CohenAvestEffros_ISIT09}]\label{lem. rank only}
Let $X \in \mathcal{X}$ and $Y \in \mathcal{Y}$ be two arbitrarily correlated random variables and let $x,y$ be two vectors that are created by taking $l$ i.i.d. samples from their joint distribution. Suppose, for some $\eps>0$, all possible sequences in $\mathcal{X}^l$ are randomly and uniformly divided to at least $2^{l(\hxy +\delta)}$ bins. Then joint typicality decoding correctly decodes $x$ with high probability (as $l \to \infty$) using $y$ and any $\Ceil{(H(X|Y)+\delta)l}$ bits of information on the bin index of the bin in which the true $x$ resides. 
\end{lemma}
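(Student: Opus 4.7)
The plan is to apply the standard Slepian--Wolf joint-typicality argument for source coding with side information. First I would fix an auxiliary $\eps'\in(0,\delta)$ and let $A_{\eps'}^{(l)}(X,Y)$ denote the set of jointly $\eps'$-typical length-$l$ sequence pairs under the joint distribution of $(X,Y)$. On input $y$ together with the $\Ceil{(\hxy+\delta)l}$-bit bin descriptor $\sigma$ of the true $x$, the decoder outputs the unique $\hat x$ whose bin descriptor equals $\sigma$ and satisfies $(\hat x,y)\in A_{\eps'}^{(l)}(X,Y)$, and declares failure if no such $\hat x$ exists or if more than one does.

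Next I would bound the probability of failure through two standard error events. The first event $E_1$ is that $(x,y)\notin A_{\eps'}^{(l)}(X,Y)$; by the Asymptotic Equipartition Property, $\Pr[E_1]\to 0$ as $l\to\infty$. The second event $E_2$ is that some $\hat x\ne x$ shares the descriptor of $x$ while being jointly $\eps'$-typical with $y$. For each fixed $y$, the number of candidates with $(\hat x,y)\in A_{\eps'}^{(l)}(X,Y)$ is at most $2^{l(\hxy+\eps')}$ by the usual typicality count. Because the random binning is uniform and independent across sequences into at least $2^{l(\hxy+\delta)}$ bins, for any fixed $\hat x\ne x$ the probability that its descriptor coincides with that of $x$ is at most $2^{-l(\hxy+\delta)}$. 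A union bound then gives
\[\Pr[E_2]\leq 2^{l(\hxy+\eps')}\cdot 2^{-l(\hxy+\delta)}=2^{-l(\delta-\eps')},\]
which vanishes in $l$ as soon as $\eps'<\delta$, establishing the lemma.

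The main obstacle I anticipate is justifying the collision bound used in $E_2$ once the ``bin descriptor'' is allowed to be any $\Ceil{(\hxy+\delta)l}$-bit summary of the bin index rather than the index itself. In the linear setting of this paper the descriptor is an $\cF_q$-linear surjection applied to the bin vector, so uniformity and pairwise independence of the random binning push forward to uniformity of the descriptor and yield the $2^{-l(\hxy+\delta)}$ collision bound directly; in greater generality it suffices that the summary has range of size at least $2^{l(\hxy+\delta)}$ with roughly balanced fibers, which matches the ``at least $2^{l(\hxy+\delta)}$ bins'' hypothesis. Once that point is in place the rest is the routine Slepian--Wolf joint-typicality calculation, which is why the authors simply cite \cite{BakshiEffros08,CohenAvestEffros_ISIT09} rather than reproduce the proof.
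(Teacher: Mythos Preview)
Your argument is the standard Slepian--Wolf random-binning/joint-typicality proof, and it is correct; you also put your finger on the one nonstandard wrinkle, namely that ``any $\Ceil{(\hxy+\delta)l}$ bits'' of the bin index need not be the full index, and you handle it appropriately via the linear-surjection/balanced-fibers remark that matches how the paper actually uses the lemma (as linearly independent equations on the bin-index blocks). The paper itself gives no proof of this lemma at all---it is simply quoted from \cite{BakshiEffros08,CohenAvestEffros_ISIT09}---so there is nothing to compare against beyond noting that your write-up is exactly the classical argument those references contain.
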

%For completeness, a proof sketch of the random binning scheme is given in Appendix \ref{app. random binning}.
In particular, Lemma \ref{lem. rank only} asserts that having access to the side information vector $y$, the message vector $x$ can be decoded with high probability using \emph{any} $\Ceil{\frac{l}{s}(H(X|Y)+\delta)}$ \emph{linearly independent equations} on the blocks describing the bin index of $x$.% This will play a critical role in our analysis. 
%%%%%%%%%%%%%%%%%%%%%%%%%%%%%%%%%%%%%

\section{Oblivious Network Models}\label{sec:obliviousnetworks}

In this section we introduce the definition of an oblivious network model. This gives a clean and very general framework capturing a wide variety of communication and (dynamic) network settings. While this was already somewhat implicit in \cite{haeupler2011analyzing} we restrict ourself to networks without adaptive adversarial behavior. This greatly facilitates the much cleaner framework presented in this section. 

\begin{definition}
A network model is \emph{oblivious} if the active edge set $E_t$ of time $t$ only depends on $t$, $E_t'$ for any $t' < t$ and some randomness. We call an oblivious network model furthermore i.i.d. if the active edge set $E_t$ is sampled independently for every time $t$ from a distribution of edge sets. 
\end{definition}

The following are examples of oblivious (and i.i.d.) network models:
In the {\em (Uniform) Gossip Model}~\cite{Mosk-Aoyamam_Shah06,Borokhovich2010Tight,haeupler2011analyzing} one has an underlying (directed) graph $G$ and in each round each node picks a random neighbor as a communication partner. A node then sends (PUSH) or requests (PULL) a packet from its partner or both (EXCHANGE). The {\em Rumor Mongering}~\cite{Deb_Medard_Choute06} or {\em Random Phone Calls Model}~\cite{karp2000randomized} is a well-studied special case of this in which $G$ is complete, that is, nodes pick a random node as a communication partner. It is easy to include more sophisticated features in an oblivious network model. Random packet losses in wired networks, or characteristics of radio networks like half-duplex transmission, collisions, packet loss rates depending on SNR and more can be easily modeled by removing edges according to (randomized) rules. 
%One example would be to model collisions which allow each node to only receive at most one message per round, namely if it is listening and exactly one neighbor sends. 
%\commentb{Switch to graph representation here}
%This behavior can be added to any network model by removing all occurrences of a node $r$ in a hyperedge $(s,\{r,r_1,r_2,\ldots\})$ if there are at least two hyperedges containing $r$ as a receiver. 
An interesting example of non-i.i.d. oblivious network models are {\em (edge-)markovian evolving graphs}~\cite{clementi2008flooding}.%,avin2008explore,clementi2011information}:\\

%The \emph{flooding time} defined next is an important measure for an oblivious network model. As demonstrated below it has deep connections to how fast information can propagate through a network. 
%dithe speed of information 
%An important measurement for how well an oblivious network model mixes is its \emph{flooding time}. As we will see, there is a strong relationship between the the speed of information dissemination and the flooding time of a network. 
%This does not seem to be needed anymore
%For a (directed) graph $G = (V,E)$ let $\Gamma^+_{G}(S)$ be the inclusive neighborhood of a subset of nodes $S \subseteq V$, that is, 
%$\Gamma^+_{G}(S)  = \{v \in V | \exists s \in S: \{s,v\} \in E \vee v = s\}$.

For any oblivious network model $M$ we can define a random flooding process $F(M,p,S)$. Informally, this process describes which nodes are informed over time if initially only nodes in $S$ are informed and from there on every informed node informs all its communication partners (as specified by $M$). The only modification to this standard infectious process is the parameter $p$ which adds an independent probability of $p$ for each transmission to be overheard. 

\begin{definition}\label{def:oblivious}
Let $M$ be an oblivious network model, $p$ be a probability of fault and $S \subseteq V$ be a starting set of nodes.
We define the flooding process $F(M,p,S)$ to be the random process $S_1 \subseteq S_2 \subseteq \ldots$ that
is characterized by $S_1 = S$ and for every time $t$ we define $S_{t+1}$ by taking each of the (directed) edges $E_t$
specified by $M$ for time $t$ independently with probability $1 - p$ to obtain $E'_t$ and then we set $S_{t+1} = \{v \in V\ |\ \exists u \in S_{t}:\ (u,v) \in E'_t \ \vee \ v = u\}$.
\end{definition}

Note, that \Cref{def:oblivious} is only well defined if $M$ is oblivious. Furthermore, $F$ becomes a time-homogeneous Markov chain if $M$ is an i.i.d. oblivious network model. Also, as long as for every time $t$ the union over the edges in $M$ from $t$ to infinity is almost surely connected then $F$ is absorbing with the unique absorbing state $V$. We say the flooding process $F$ stops if it reaches this absorbing state and we denote the time this happens with the random variable $S_F$. The next definition pairs this flooding time with a throughput parameter $\alpha$ that corresponds to the exponent of the flooding process tail probability. The reason for this definition and its connection to the multi-message throughput behavior of network coding becomes apparent in the statement and proof of \Cref{thm:oldthm} below. 

\begin{definition}\label{def:floodingtime and throughput}
We say an oblivious network model $M$ on a node set $V$ floods in time $T$ with throughput $\alpha$ if there exists a prime power $q$ such that for every $v \in V$ and every $k > 0$ we have $P[S_{F(M,1/q,\{v\})} \geq T + k] < q^{-\alpha k}$.
\end{definition}

To give a few illustrating examples of flooding times we note that the random phone call network model on $n$ nodes floods in $\Theta(\log n)$ time with constant throughput. The uniform gossip model on a connected degree bounded graph $G$ floods in time $\Theta(D)$ and with constant throughput where $D$ is the diameter of $G$. In many oblivious network models it is easy to give tight bounds on the flooding time and throughput. 

%\begin{definition}
%We say a random variable $X$ is concentrated with exponent $\alpha$ 
%beyond $T$ if for every $k > 0$ we have $P[X \geq T + k] < 2^{-\alpha k}$.
%\end{definition}

%Note that for most oblivious and time-homogeneous network models $M$ the stopping time 
%of the flooding process $F = F(M,p,\{v\})$, where $v \in V$ is any node, is concentrated
%beyond $C \cdot E[S_F]$ for some constant $C$. % with exponent $\alpha = \log O(p^{-1})$.

%In particular for any constant $p$ the flooding time of the random phone call network model
%on $n$ nodes is concentrated beyond $\Theta(\log n)$ with constant exponent. The flooding 
%time of the uniform gossip model with constant $p$ on a graph $G$ with $n$ nodes is also
%concentrated beyond $\Theta(D)$ where $D$ is the diameter. The exponent in this case is the following min-cut quantity:
%$$\alpha = \Theta\left(\min_{\emptyset \neq S \subset V} \sum_{(u,v) \in G \cap (S \times \overline{S})} 1/d(u) + 1/d(v)\right) \geq \Omega(1/\Delta)$$
%where $d(u)$ is the degree of node $u$ and $\Delta = \max_u d(u)$ is the maximum degree~\cite{haeupler2011analyzing}.

%\subsubsection{Uncorrelated Information Spreading}

With this framework for oblivious network models in place we can give a cleaner restatement of Theorem 3 in \cite{haeupler2011analyzing}. We also provide a sketch of the proof since we will later expand on the ideas used therein. 

\begin{theorem}[Theorem 3 of \cite{haeupler2011analyzing}]\label{thm:oldthm}
Suppose $M$ is an oblivious network model that floods in time $T$ with throughput $\alpha$. 
Then, for any $k$, random linear network coding in the network model $M$ spreads $k$ arbitrarily distributed messages to all nodes with probability $1 - \eps$ after $T' = T + \frac{1}{\alpha}(k + \log \eps^{-1})$ rounds.
\end{theorem}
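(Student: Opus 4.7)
The plan is to adapt the projection analysis of \cite{haeupler2011analyzing}. Rather than directly tracking the rank of each node's received subspace, I would fix an arbitrary nonzero coefficient vector $\mu \in \cF_q^k$ and follow the set $K_t(\mu)$ of nodes that \emph{know} $\mu$ in the sense of \Cref{def:knowledge}. A node has a full-rank subspace $S_v$ (and hence can recover all $k$ messages) if and only if $S_v$ is non-orthogonal to every nonzero $\mu$, so bounding the time by which every node knows every fixed $\mu$ and then union-bounding over the $q^k - 1$ nonzero choices of $\mu$ suffices.

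The core technical step is to show that for each fixed $\mu$ the process $K_t(\mu)$ stochastically dominates the flooding process $F(M, 1/q, K_1(\mu))$. The initial set $K_1(\mu)$ is nonempty because at least one source holds a message whose corresponding coordinate of $\mu$ is nonzero, so that source's singleton subspace is non-orthogonal to $\mu$. The coupling argument is: whenever $u \in K_t(\mu)$ transmits along an edge of $E_t$, its packet is a fresh random $\cF_q$-linear combination of vectors in $S_u$, and since some vector in $S_u$ is non-orthogonal to $\mu$, the resulting combination is orthogonal to $\mu$ with probability exactly $1/q$. Hence the recipient enters $K_{t+1}(\mu)$ with probability at least $1 - 1/q$, matching the per-edge success probability in \Cref{def:oblivious} for fault parameter $p = 1/q$.

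Combining the two ingredients, for each nonzero $\mu$ the first time at which $K_t(\mu) = V$ is stochastically bounded by $S_{F(M, 1/q, K_1(\mu))}$. Applying the flooding assumption $\Pr[S_{F(M,1/q,\{v\})} \ge T + k'] < q^{-\alpha k'}$ together with a monotonicity observation to handle arbitrary nonempty starting sets, setting $k' = \tfrac{1}{\alpha}(k + \log_q \eps^{-1})$, and union-bounding over the at most $q^k$ nonzero vectors $\mu$ yields a total failure probability of at most $q^k \cdot q^{-\alpha k'} \leq \eps$, which gives the claimed $T'$.

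The step I expect to be the main obstacle is the coupling: one must verify that the $1/q$ orthogonality bound holds round by round uniformly over the entire history of past transmissions, and must simultaneously couple the orthogonality events across all nodes transmitting in a single round so that they dovetail with the independent per-edge erasures in $F(M,1/q,\cdot)$. This is where the independence of the fresh random coefficients drawn each round is essential, and where a careful filtration-based coupling (already present in the original projection analysis of \cite{haeupler2011analyzing}) is needed to rigorously justify the stochastic domination.
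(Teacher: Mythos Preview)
Your proposal is correct and follows essentially the same route as the paper's proof: fix a nonzero $\mu$, observe that knowledge of $\mu$ propagates like the flooding process $F(M,1/q,\cdot)$ since a packet from a knowing node fails to be non-orthogonal to $\mu$ with probability $1/q$, invoke the flooding tail bound with monotonicity in the start set, and union-bound over all $q^k$ vectors. Your discussion of the coupling obstacle is more explicit than the paper's terse ``it is easy to check,'' but the underlying argument is identical.
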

\begin{proof}
The random linear network coding protocol we analyze will use the same field size that achieves the parameters $T$ and $\alpha$ for $M$ in Definition~\ref{def:floodingtime and throughput}. We fix a coefficient vector $\mu \in F_q^k$. This vector is initially known to a non-empty subset $S$ of 
nodes. It is easy to check that the probability that a node $v$ does not know $\mu$ after it has received a packet from a node that knows $\mu$ is at most $1/q$. This implies that knowledge of $\mu$ spreads through the network exactly as the flooding process $F_{M,1/q,S}$. Using the assumption, Definition~\ref{def:floodingtime and throughput} and the monotonicity of $S_{F(M,1/q,S)}$ in $S$ we get that the probability that a vector $\mu \in F_q^k$
is not known to all vectors after $T'$ steps is at most $q^{-(k + \log \eps^{-1})}$. A union bound over all $q^k$ vectors shows that with probability at least $1 - q^{-\log \eps^{-1}} \geq 1 - \eps$ all node will know about all vectors and it is easy to see that this implies that all nodes are able to decode the messages. 
\end{proof}

%%%%%%%%%%%%%%%%%%%%%%%%%%%%%%%%%%%%%%%%%%%%%%%%%%%%%%%%%%%%%%%%%%%%%%%%%%%%%%%%%

\section{Simple, Direct Proofs for Tight Stopping Times}
In this section we give a simple, direct derivation of tight stopping time bounds for gossip with one source and side information at the nodes and gossip with two correlated sources. Our two main results in this section are the following.
\begin{theorem}\label{theo. one source}
Suppose $M$ is an oblivious network model that floods in time $T$ with throughput $\alpha$. We assume a single message $x$ generated from $X$ and side information $y_v$ generated from $Y_v$ at every node. Fix an error probability $\epsilon>0$. Then, for any $\delta>0$, for any large enough block length $l$ and any packet size $s$, node $v$ will correctly decode $x$ with probability at least $1-\epsilon$ after $T' = T+\frac{1}{\alpha}\left(\frac{l}{s}(H(X|Y_v)+\delta)+\log \eps^{-1} + 3 \right)$ rounds. 
\end{theorem}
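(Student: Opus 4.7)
The strategy is to reduce the decoding event to a lower bound on $\dim S_v$ via Lemma~\ref{lem. rank only}, and then to analyze this rank by extending the per-vector projection argument from the proof of Theorem~\ref{thm:oldthm}. After binning, the message $x$ is encoded as $k_b := \Ceil{\frac{l}{s}(H(X)+\delta)}$ blocks over $\cF_q$, and coefficient vectors live in $\cF_q^{k_b}$. By Lemma~\ref{lem. rank only}, for $l$ large enough, node $v$ decodes correctly with probability at least $1-\eps/2$ as soon as it holds $K_v := \Ceil{\frac{l}{s}(\hxyv+\delta)}$ linearly independent equations on the blocks, i.e., as soon as $\dim S_v \geq K_v$. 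It therefore suffices to show that $P[\dim S_v < K_v] \leq \eps/2$ at time $T'$.

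Next, I would translate this rank statement into a counting statement about unknown coefficient vectors. If $\dim S_v < K_v$ then $\dim S_v^\perp \geq k_b - K_v + 1$, so in the sense of Definition~\ref{def:knowledge} the node $v$ fails to know at least $q^{k_b-K_v+1} - 1$ non-zero vectors. For any fixed non-zero $\mu \in \cF_q^{k_b}$, the argument from the proof of Theorem~\ref{thm:oldthm} applies verbatim: knowledge of $\mu$ spreads through the network like the flooding process $F(M,1/q,S_\mu)$, where $S_\mu$ is the non-empty set of nodes initialized with $x$. Hence, by Definition~\ref{def:floodingtime and throughput}, $P[v \text{ does not know } \mu \text{ at time } T'] \leq q^{-\alpha(T'-T)}$. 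Letting $N$ be the number of non-zero coefficient vectors unknown to $v$ at time $T'$, linearity of expectation gives $E[N] \leq q^{k_b} q^{-\alpha(T'-T)}$, and Markov's inequality yields $P[\dim S_v < K_v] \leq P[N > q^{k_b-K_v}] \leq q^{K_v} q^{-\alpha(T'-T)}$. Plugging in the chosen $T'$ together with $K_v < \frac{l}{s}(\hxyv+\delta) + 1$ makes the right-hand side at most $\eps/2$; the ``$+3$'' slack in the theorem statement absorbs the ceiling and the base of the logarithm. A final union bound with the failure from Lemma~\ref{lem. rank only} completes the proof.

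The main obstacle, and the main new idea beyond Theorem~\ref{thm:oldthm}, is the projection step that links the rank of $S_v$ to a counting statement over coefficient vectors: Theorem~\ref{thm:oldthm} requires \emph{every} $\mu$ known at every node (rank $=k_b$ everywhere), which pays a union-bound factor of $q^{k_b}$ and yields $k$ in the bound, whereas here one only needs rank $\geq K_v$ at the single node $v$, a weaker condition that tolerates up to $q^{k_b-K_v}$ simultaneously unknown vectors and shrinks the union-bound cost to $q^{K_v}$. This is precisely what makes the running time scale with the conditional entropy $H(X|Y_v)$ rather than the unconditional $H(X)$.
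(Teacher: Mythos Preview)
Your argument is correct and self-contained, but it is \emph{not} the route the paper takes. Both proofs share the reduction via Lemma~\ref{lem. rank only} to the event $\dim S_v \geq K_v$ and the per-vector flooding tail bound from the proof of Theorem~\ref{thm:oldthm}. The divergence is in how one passes from per-vector bounds to a rank guarantee. The paper invokes Lemma~\ref{lemma q^l+1}: it exhibits an explicit test set $Z$ of $q^{K_v-1}+1$ coefficient vectors with the property that if $v$ knows \emph{all} vectors in $Z$ then $\dim S_v \geq K_v$, and then applies a union bound over $Z$. You instead observe that $\dim S_v < K_v$ forces $|S_v^\perp|\geq q^{k_b-K_v+1}$, bound the expected number $N$ of unknown vectors by linearity, and apply Markov's inequality. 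Both routes land on essentially the same $q^{K_v}$ factor in the final bound; yours is arguably more elementary because it dispenses with the combinatorial Lemma~\ref{lemma q^l+1} entirely, while the paper's version has the advantage of being a pure union bound (no first-moment step) and of isolating a reusable structural statement. One minor nit: in the Markov step you want $P[N \geq q^{k_b-K_v+1}-1]$ rather than $P[N > q^{k_b-K_v}]$ to be safe in the edge case $k_b=K_v$, but this does not affect the conclusion.
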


\begin{theorem}\label{thm: two sources no sideinformation} 
Suppose $M$ is an oblivious network model that floods in time $T$ with throughput $\alpha$. We assume two messages $x_1,x_2$ are generated from $X_1,X_2$ and nodes have no side information. Fix an error probability $\epsilon>0$. Then, for any $\delta>0$ and for large enough $l$, with probability at least $1-\epsilon$ every node will correctly decode $x_1,x_2$ after $T+\frac{1}{\alpha}\left(\Ceil{\frac{l}{s}(H(X_1,X_2)+2\delta)}+\log \eps^{-1} + 3\right)$ rounds.
\end{theorem}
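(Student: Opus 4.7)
The plan is to lift the projection-based flooding analysis from the proof of \Cref{thm:oldthm} to the joint coefficient space $F_q^{b_1+b_2}$, and then to close with a Slepian--Wolf style joint-typicality decoding step. Write $b_i = \frac{h_i\log q}{s}\approx \frac{l}{s}(H(X_i)+\delta)$ for the number of blocks produced by the random binning of message $x_i$, so every transmitted packet carries a coefficient vector in $F_q^{b_1+b_2}$; set $K=\Ceil{\frac{l}{s}(H(X_1,X_2)+2\delta)}$. The goal is to show that every node $v$ achieves $\dim S_v\ge K$ within the stated time, and then to decode $(x_1,x_2)$ jointly from its subspace of received linear equations.

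First, I would repeat the projection argument of \Cref{thm:oldthm} for every non-zero $\mu=(\mu_1,\mu_2)\in F_q^{b_1+b_2}$. If $\mu_i\ne 0$ then source $i$, whose initial subspace is the $b_i$-dimensional coordinate subspace supported on message $i$'s blocks, knows $\mu$ in the sense of \Cref{def:knowledge}; since $\mu\ne 0$, at least one source does, so knowledge of $\mu$ spreads from a non-empty initial set as the flooding process $F(M,1/q,\cdot)$, and the throughput assumption yields
\[
\Pr\bigl[\exists v:\ v\text{ does not know }\mu\text{ at time }T+t/\alpha\bigr] \;<\; q^{-t}.
\]

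Second, I would promote this per-vector bound to a rank lower bound via a Markov-style counting. If some node $v$ has $\dim S_v<K$ at time $T'$, then $\dim S_v^\perp > b_1+b_2-K$, so $v$ fails to know at least $q^{b_1+b_2-K+1}-1$ non-zero vectors in $F_q^{b_1+b_2}$. Letting $N$ count the non-zero $\mu\in F_q^{b_1+b_2}$ that are unknown to at least one node, linearity of expectation and the bound above give $E[N]\le (q^{b_1+b_2}-1)\,q^{-\alpha(T'-T)}$. The bad event $\{\exists v:\dim S_v<K\}$ is contained in $\{N\ge q^{b_1+b_2-K+1}-1\}$, and Markov's inequality combined with the choice $\alpha(T'-T)=K+\log\eps^{-1}+3$ bounds its probability by at most $\eps/2$.

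Third, I would extend \Cref{lem. rank only} to two sources. By rank--nullity applied to the coordinate projections $\pi_i:F_q^{b_1+b_2}\to F_q^{b_i}$, any subspace $S\subseteq F_q^{b_1+b_2}$ with $\dim S\ge K$ automatically satisfies $\dim \pi_1(S)\ge K-b_2\approx \frac{l}{s}(H(X_1|X_2)+\delta)$, $\dim \pi_2(S)\ge K-b_1\approx \frac{l}{s}(H(X_2|X_1)+\delta)$, and $\dim S\ge K\approx \frac{l}{s}(H(X_1,X_2)+2\delta)$. These are precisely the three rate-region inequalities for the underlying random-binning Slepian--Wolf code, so a standard jointly-typical decoding argument in the spirit of \cite{BakshiEffros08,CohenAvestEffros_ISIT09} shows that for large enough $l$, any linear-observation window with these rank bounds recovers $(x_1,x_2)$ with probability at least $1-\eps/2$. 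A union bound over the two error events closes the proof.

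The main obstacle will be the third step: unlike \Cref{lem. rank only}, whose observations live in the bin index of a single source, here the received equations arbitrarily mix the two independently-binned messages, so one must rule out spurious jointly-typical pairs $(\tilde x_1,\tilde x_2)$ of all three types (differing only in $x_1$, only in $x_2$, or in both) using the three projection dimensions $\dim\pi_1(S_v)$, $\dim\pi_2(S_v)$, and $\dim S_v$ respectively. Step 1 is a direct reuse of the machinery already developed for \Cref{thm:oldthm}, and step 2 is essentially a one-line Markov bound.
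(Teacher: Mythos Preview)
Your proof is correct and lands on the same endpoint as the paper, but Step~2 takes a genuinely different route. The paper's argument (signaled by ``Their proofs \ldots\ involve only random binning and \Cref{lemma q^l+1}'') invokes \Cref{lemma q^l+1} to produce an explicit set $Z\subset \cF_q^{b_1+b_2}$ of size roughly $q^{K}$ such that knowing every vector in $Z$ forces $\dim S_v\ge K$, and then union-bounds the flooding failure over only the vectors in $Z$, exactly as in the proof of \Cref{theo. one source}. You instead bypass \Cref{lemma q^l+1} with a first-moment argument: sum the per-vector flooding bound over \emph{all} of $\cF_q^{b_1+b_2}$, note that $\dim S_v<K$ forces at least $q^{b_1+b_2-K+1}-1$ unknown vectors, and apply Markov's inequality. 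Both routes produce the same $q^{K}$ factor and hence the same stopping time; yours is more elementary (no auxiliary combinatorial lemma), while the paper's isolates a reusable structural statement relating knowledge to rank. Your Step~3---deriving all three Slepian--Wolf inequalities from the single bound $\dim S_v\ge K$ via rank--nullity on the coordinate projections $\pi_i$---is precisely the ``single rank constraint'' the paper alludes to in the paragraph following the proof of \Cref{theo. one source}, and your identification of the mixed-equation joint-typicality decoding as the one step requiring genuine work beyond \Cref{lem. rank only} is accurate.
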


The idea for proving these theorems is to generalize the observation of \cite{haeupler2011analyzing} that the question of when a node can decode is equivalent to determining when this node knows (see \Cref{def:knowledge}) enough coefficient vectors. The proof of \Cref{thm:oldthm} shows that flooding or spreading of knowledge of vectors can be analyzed using a union bound. This implies that only the number of vectors needed is of importance. In the case with \emph{uncorrelated} sources and no side information essentially knowledge of all coefficient vectors is necessary. In the correlated scenario, however, we want to relate the number of vectors a node $v$ needs to know to the conditional entropy $H(X|Y_v)$. \Cref{lem. rank only} helps in this respect. It asserts that in order to decode, a node with side information $Y$ does not necessarily need $i = \Ceil{(H(X|Y)+\delta)l}$ specific bits, but rather, assuming joint typicality decoding, it requires only \emph{any} sufficient amount of information about the index of the bin in which $x$ resides. This is achieved by {\em any} $i/s$ packets containing independent equations on the bin index. We can thus focus on the knowledge a node is required to obtain in order for its coefficient matrix to have rank at least $i/s$. 

Unfortunately it is possible that a node knows many vectors without having a large rank. In fact, upon reception of the first packet a node gets to know all but a $1/q$ fraction of all vectors. On the other hand, in order to prove faster stopping times we want to argue that the knowledge of only an exponentially small fraction of all vectors suffices for decoding. This is achieve by the following lemma which shows that indeed only $q^l$ {\em specific coefficient vectors} suffice to guarantee that at least $l$ independent coefficient vectors were received:

\begin{lemma}\label{lemma q^l+1}
Let $\mathcal V$ be be a finite dimensional vector space over a finite field $\cF_q$. For every $0 \leq h < \dim \mathcal V$ there exist $w = q^h + 1$ vectors $v_1, \ldots, v_w \in \mathcal V$ such that for any (subspace) $K \subset \mathcal V$ for which $K^\perp$ does not contain $v_i$ for any $i$ has dimension at least $h+1$.
\end{lemma}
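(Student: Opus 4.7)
I would approach the problem by reducing it to a covering condition inside a small subspace of $\mathcal V$, and then realize the exact count $q^h+1$ via a pigeonhole argument in a quotient of size $q^h$.

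First, take the contrapositive: if $\dim K \le h$ then $\dim K^\perp \ge n-h$, and one wants some $v_i \in K^\perp$. Setting $U := K^\perp$, and noting that supersets of $U$ contain the $v_i$'s of their subspaces, the task reduces to showing that every $(n-h)$-dimensional subspace $U \subseteq \mathcal V$ contains at least one chosen vector.

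Second, I would confine the $v_i$'s to a single $(h+1)$-dimensional subspace $W \subseteq \mathcal V$. The Grassmann formula gives
\[
\dim(U \cap W)\ \ge\ (n-h) + (h+1) - n\ =\ 1,
\]
so $U$ meets $W$ in at least a non-zero line. To exploit this, I would endow the $v_i$'s with additive structure: fix a hyperplane $T \subset W$ (so $\dim T = h$, $|T| = q^h$) and a vector $c \in W \setminus T$, and take $v_1,\dots,v_{q^h}$ to be the $q^h$ elements of the affine coset $c + T$ and $v_{q^h+1} := v^{\ast}$ a single extra vector chosen from $T$.

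Third, verify the covering via pigeonhole. For a given codim-$h$ subspace $U$ consider the projection $\pi: \mathcal V \to \mathcal V / U$, whose codomain has exactly $q^h$ elements. Among the $q^h + 1$ vectors $v_i$, either some $\pi(v_i) = 0$, which yields $v_i \in U$ directly, or by pigeonhole two vectors $v_i, v_j$ from the coset $c + T$ share an image, so $v_i - v_j \in T \cap U$. This non-zero element of $T \cap U$, together with the placement of $v^\ast$ inside $T$, is meant to pin down an actual $v_\ell \in U$.

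The hardest step will be precisely the last one: the pigeonhole collision yields $v_i - v_j \in U$, but $v_i - v_j$ lies in $T$ and need not itself equal any chosen $v_\ell$. The single extra vector $v^\ast$ must therefore be selected so that whenever the pigeonhole step fires non-trivially, $v^\ast$ itself lies in $U$; and if this turns out to be too delicate a demand for a single $v^\ast$, I would retreat to placing the $v_i$'s in a slightly larger affine structure inside $W$ so that $v_i - v_j$ is forced to land in the chosen set. Making such a choice work uniformly for every $(n-h)$-dimensional $U \subseteq \mathcal V$, while matching the constant $q^h+1$ exactly, is where I expect the bulk of the effort to go.
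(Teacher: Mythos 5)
The difficulty you flag in your last step is a genuine gap, and at the stated cardinality it cannot be closed by any choice of $v^\ast$ or any affine rearrangement inside $W$. After your (correct) contrapositive reduction, what is required is a set of \emph{nonzero} vectors of $\mathcal V$ (write $n=\dim\mathcal V$) meeting every subspace of dimension $n-h$. Any such set must determine at least $\frac{q^{h+1}-1}{q-1}=q^h+q^{h-1}+\cdots+1$ distinct projective points (this is the Bose--Burton bound): if it determined fewer, one could grow a chain $\{0\}=U_0\subset U_1\subset\cdots\subset U_{n-h}$ of subspaces avoiding the set, since for $j\le n-h-1$ the number $\frac{q^{n-j}-1}{q-1}\ge\frac{q^{h+1}-1}{q-1}$ of one-dimensional extensions of $U_j$ exceeds the number of forbidden extensions (a chosen vector $v_i\notin U_j$ rules out only $U_j+\langle v_i\rangle$). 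Since $\frac{q^{h+1}-1}{q-1}>q^h+1$ whenever $h\ge 2$, the pigeonhole plan is doomed in exactly the case you identified: a collision $v_i-v_j\in T\cap U$ with $v_i,v_j\in c+T$ cannot in general be converted into a chosen vector lying in $U$. (The statement as literally printed survives only if one of the $v_i$ is allowed to be $0$, which makes the hypothesis unsatisfiable and the lemma vacuous --- useless for the decoding application, since no node ever knows the zero coefficient vector.)

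The good news is that your second step already \emph{is} a correct proof once the constant $q^h+1$ is relaxed: take $W$ of dimension $h+1$ and let the $v_i$ be one nonzero representative of each of the $\frac{q^{h+1}-1}{q-1}$ one-dimensional subspaces of $W$ (or simply all $q^{h+1}-1$ nonzero vectors of $W$). If $\dim K\le h$ then $\dim K^\perp\ge n-h$, hence $\dim(K^\perp\cap W)\ge 1$ and $K^\perp$ contains some $v_i$; no pigeonhole is needed, and by the counting above this size is optimal. The change is harmless downstream: only $\log_q|Z|$ enters the union bound in Theorems~\ref{theo. one source} and~\ref{thm: two sources no sideinformation}, and it grows from $h$ to at most $h+1$, i.e.\ at most one extra round absorbed by the additive slack. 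For comparison, the construction sketched for this lemma in the paper --- all $q^h$ elements of an $h$-dimensional subspace plus one vector orthogonal to it --- has the same defect: it contains $0$, and even after discarding $0$ it fails (in $\cF_2^2$ with $W=\langle e_1\rangle$ and $v_w=e_2$, the subspace $K=\langle(1,1)\rangle$ knows both chosen nonzero vectors yet has dimension $1$), so the $(h+1)$-dimensional blocking-subspace construction with the slightly larger vector count is the appropriate repair for both arguments.
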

\iffalse
\begin{proof}
We will show that $\{v_1,\ldots,v_{q^h}\}$ can be any subspace $W$ of dimension $h$ while the last vector $v_w$ is chosen orthogonal to $W$. To prove that this works we take any subspace $L$ of dimension at most $h$ and show that its orthogonal complement $L^\perp$ intersects with $W \cup v_w$. Note that $\dim L^\perp \geq \dim \mathcal V - h$. Thus, if $L^\perp$ and $W$ do not intersect we can find a set of orthogonal basis vectors for $\mathcal V$ of which $h$ span $W$ and the remaining ones span $L^\perp$. Since $v_w$ is orthogonal to $W$ it is spanned by the basis vectors in $K^\perp$ and thus $v_w \in K^\perp$ -- a contradiction.
\end{proof}
\fi 
It is now possible to prove the two main results of this section. Their proofs are self-contained and involve only random binning (\Cref{lem. rank only}) and \Cref{lemma q^l+1}:

\begin{proof}[Proof Sketch of Theorem~\ref{theo. one source}]
We use the field size $q$ that achieves the parameters $T$ and $\alpha$ in \Cref{def:floodingtime and throughput}. We furthermore choose $l$ large enough so that the decoding probability in \Cref{lem. rank only} is at most $\eps/2$. By Lemma~\ref{lem. rank only}, any node $v$ with access to the side information vector $y_v$ and $\Ceil{\frac{l}{s}(H(X|Y_v)+\delta)}$ independent equations on the blocks describing the bin index of $x$ assigned by the random binning procedure, can decode $x$ with probability at least $1 - \eps/2$. It thus remains to show that with probability $1 - \eps/2$ we have $\dim S_v \geq \Ceil{\frac{l}{s}(H(X|Y_v)+\delta)}$ after $T'$ rounds. To prove this we apply Lemma~\ref{lemma q^l+1} and get that there exists a set $Z$ of $2^{\Ceil{\frac{l}{s}(H(X|Y_v)+\delta)}}$ coefficient vectors such that if $v$ has knowledge of these vectors, it indeed has sufficiently many independent equations. Furthermore, we refer to the proof of \Cref{thm:oldthm} for the fact that knowledge of any coefficient vector (in $Z$) spreads through the network like a flooding process. As before we thus get the fact that in the assumed network model $M$ the probability that any of the coefficient vectors (in $Z$) is not known after $T+\frac{1}{\alpha}(k+(\log \epsilon^{-1}+1))$ rounds is smaller than $\epsilon/2 \cdot q^{-k}$. Setting $k=\log |Z|$ and using a union bound over all coefficient vectors in $Z$ we get as a result that indeed after $T'$ rounds the probability that $v$ has received sufficiently many independent coefficient vectors is at least $1 - \eps/2$.
\end{proof}
While the proof of Theorem~\ref{thm: two sources no sideinformation} is similar in nature to that of Theorem~\ref{theo. one source}, there is delicate point when considering multiple sources. In a single source scenario, for each terminal there is only one equation governing the rate. That is, $r \ge H(X|Y_v)+\delta$. Using Lemma~\ref{lemma q^l+1}, this rate constraint is translated into a rank constraint, namely, $dim(S_v)\ge\Ceil{\frac{l}{s}(H(X|Y_v)+\delta)}$. For more than one source, however, the rate region is given by multiple rate constraints, and one has to make sure all are satisfied. Indeed, for two sources and no side information at the nodes this can be done using a single rank constraint. For more than two sources, or when additional side information is available, a more refined analysis is required. This is the subject of the next section.
%%%%%%%%%%%%%%%%%%%%%%%%%%%%%%%%%%%%%%%%%%%%%%%%%%%%%%%%%%%%%%%%%%%%%%%%%%%%%%%%%%%%%%%%%%%%%%

\section{Characterizing Capacities in Oblivious Network Models}\label{sec:capacities}
To date, analysis of gossip schemes focused only on the dissemination time - the number of rounds required to gain the complete knowledge in the network. However, especially when dynamic networks are analyzed, it is interesting to gain a more accurate measure of the \emph{actual capacities achievable between sets of nodes}. Namely, to analyze the capacities induced by the packet exchange process in algebraic gossip. This is an interesting question in its own right, and, in particular, can give a ``black-box'' method to transfer any results of prior works that bound the rates or capacities needed between sources and sinks in the static memory-free setting to stopping times in oblivious network models. 

Herein, we develop such a characterization, and apply it to the results of \cite{Ho_et_al06} and \cite{BakshiEffros08} to obtain stopping times for gossip protocols with an arbitrary number of correlated sources and side-information, generalizing the results from the last section. 
We first introduce the required notation.

\begin{definition}
Let $T>0$, node set $V$ and active edges $E_1, E_2, \ldots, E_{T}$ be given. We define a {\em path} $P$ from $s$ to $d$ to be a sequence of nodes $P = (v_0, v_1, \ldots, v_T)$ such that $v_0 = s$, $v_T = d$ and for every $t \leq T$ we have $v_{t-1} = v_{t}$ or $(v_{t-1},v_{t}) \in E_t$. We furthermore define a set of $m$ paths $P_1, \ldots, P_M$ with weights $w_1,\ldots,w_m$ to be {\em valid} if for every $t<T$ and every $(u,v) \in E_t$ the weights of paths using $(u,v)$ sum to at most one. Lastly, we say a set of valid weighted paths {\em achieves a capacity} of $c$ between two nodes $s$ and $d$ if the weights of paths from $s$ to $d$ sum up to $c$. 
\end{definition}

Quite intuitively these paths correspond to an information flow through the network from the sources to the sink. This intuition can be made formal and one can give an explicit equivalence between the algebraic gossip protocol and random linear network coding in the classical memory-free model (e.g., \cite{Ho_et_al06}). This was done in \cite{NCoptimality} which also describes the hypergraph $G_{PNC}$ that corresponds to a sequence of active edges. We omit the details of this equivalence and instead only recall the facts needed in this paper:
% and have a very intuitive equivalence to the classical network coding setting. This equivalence was made explicit in \cite{ITW} and the following very intuitive lemma can be easily proven using the methods introduced in \cite{ITW}:
\begin{lemma}\label{lem:equivalency}
Let node set $V$, the active edges $E_1, \ldots, E_T$, destination $d \in V$ and sources $s_1, s_2, \ldots, s_k \in V$ be given. Algebraic gossip on $\{E_i\}_{i=1}^T$ is equivalent to classical random linear network coding in the transformed hypergraph $G_{PNC}$ described in~\cite{NCoptimality}. In particular, if for some integers $c_1,\ldots,c_k$, it is possible for every $s_i$ to transmit $c_i$ packets to $d$, then there exists a sequence of valid paths of weight one and a rate $c_i$ between $s_i$ and $d$. Conversely, if for some positive reals $c_1,\ldots,c_k$ there is a set of valid weighted paths that achieve a capacity $c_i$ between $s_i$ and $d$, then the capacities $c_i$ lie in the min-cut region of $G_{PNC}$.
\end{lemma}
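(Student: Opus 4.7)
The plan is to construct the hypergraph $G_{PNC}$ explicitly as a layered (time-expanded) object whose structure mirrors the sequence $E_1,\ldots,E_T$, and then invoke the equivalence of algebraic gossip and classical random linear network coding over such hypergraphs as developed in \cite{NCoptimality,Ho_et_al06}. Concretely, I would take $T+1$ copies of each node $v\in V$, labelled $v_0,v_1,\ldots,v_T$; for every $(u,v)\in E_t$ I would place a hyperedge from $u_{t-1}$ to $v_t$ carrying the corresponding (random) coded packet, and I would add ``memory'' edges $v_{t-1}\to v_t$ representing that a node retains its state between rounds. Identifying source $s_i$ with $(s_i)_0$ and the destination with $d_T$, a direct induction on $t$ shows that the joint distribution of coefficient vectors arriving at $d$ under algebraic gossip on $\{E_t\}$ coincides with the joint distribution produced by random linear coding on $G_{PNC}$, establishing the overall equivalence.

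For the forward direction, assume each $s_i$ can transmit $c_i$ integer packets to $d$. By the classical network coding max-flow theorem for static hypergraphs (Ho et al.\ \cite{Ho_et_al06}) the tuple $(c_1,\ldots,c_k)$ lies in the integer min-cut region of $G_{PNC}$, and a standard integer flow decomposition yields, for each source $i$, a collection of $c_i$ edge-disjoint unit-weight paths from $(s_i)_0$ to $d_T$. Every such layered path projects onto a sequence $(v_0,v_1,\ldots,v_T)$ in the original dynamic graph in which consecutive entries either coincide (a memory step) or are joined by an active edge of $E_t$ (a transmission step), which is exactly the definition of a path in the lemma. Edge-disjointness across all sources guarantees that no edge $(u,v)\in E_t$ is used by more than one unit-weight path, so the resulting collection is valid and provides capacity $c_i$ between $s_i$ and $d$.

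The converse proceeds in the opposite direction: given a set of valid weighted paths achieving capacity $c_i$ between $s_i$ and $d$, I would lift each path of weight $w_j$ to the corresponding layered path in $G_{PNC}$ and push $w_j$ units of flow along it. The validity condition $\sum_{j:(u,v)\in P_j} w_j \le 1$ translates edgewise into the fractional capacity constraint on every layered transmission edge of $G_{PNC}$ (memory edges are uncapacitated), so the lifted flows form a feasible fractional multicommodity flow with source-$i$ throughput $c_i$. Max-flow/min-cut duality on the hypergraph $G_{PNC}$ then places $(c_1,\ldots,c_k)$ in its min-cut region, as claimed.

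The main obstacle, and the reason the paper cites \cite{NCoptimality} rather than re-deriving things, is that $G_{PNC}$ is genuinely a hypergraph: a single transmission by $u$ in round $t$ can be received (or overheard) by several nodes simultaneously and all those receptions share the same random combination. Establishing the distributional equivalence in Step 1 and verifying that both integer flow decomposition and max-flow/min-cut duality extend cleanly to this hyperedge setting is the technical core of the argument; once those are in place, the translation between valid path sets in $\{E_t\}$ and flows in $G_{PNC}$ is purely bookkeeping.
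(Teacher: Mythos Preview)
The paper does not actually prove this lemma. Immediately before stating it, the authors write ``This was done in \cite{NCoptimality} which also describes the hypergraph $G_{PNC}$ \ldots\ We omit the details of this equivalence and instead only recall the facts needed in this paper,'' and no proof follows. So there is nothing in the paper to compare your attempt against; the lemma is imported wholesale from \cite{NCoptimality}.

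Viewed as a reconstruction of what that external reference must contain, your sketch is on target. The time-expanded layered structure with infinite-capacity memory edges and unit-capacity transmission (hyper)edges is the standard construction; the distributional equivalence by induction on $t$, integer flow decomposition for the forward direction, and lifting valid weighted paths to a feasible fractional flow for the converse are exactly the right moves. You also correctly flag the one non-routine point, namely that a single transmission in a round may reach several receivers with the \emph{same} random combination, so $G_{PNC}$ is a genuine hypergraph and one must check that the max-flow/min-cut and flow-decomposition machinery survives in that setting. That is precisely the content the paper outsources to \cite{NCoptimality}.
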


%Given this setup the question we are interested in answering is the following:

%Suppose we sample a sequence of $T'$ sets of active edges from an oblivious network model $M$. How large does $T'$ need to be such that for a given sink $d$ and a set of $k$ source nodes $s_1, s_2, \ldots, s_k \in V$ with non-zero capacities $c_1, c_2, \ldots, c_k >0$ we can expect to find a set of valid weighted paths that achieve the capacity $c_i$ between $s_i$ and $d$ for every $i$.

%Going in this direction, we prove the following lemma:

Given this setup we show the first result in this direction:

\begin{lemma}\label{lem:disjointcapacities}
Let $M$ be a network model on a node set $V$ that floods in time $T$ with throughput $\alpha$.
For any $T'$, any $\eps >0$, any destination $d \in V$ and any set of $k$ source nodes $s_1, s_2, \ldots, s_k \in V$ with integral capacities $c_1, c_2, \ldots, c_k \geq 1$ suppose $E_1,\ldots,E_{T'}$ is a sequence of active edges on $V$ sampled from $M$. 
If $T' \geq T + \frac{1}{\alpha}(\sum_i c_i + \log \eps^{-1})$ then with probability at least $1 - \eps$ there exists a selection of valid weighted paths that achieve the capacity $c_i$ between $s_i$ and $t$ for every $i$.
%(The weights of the paths can furthermore be chosen to be one.)
\end{lemma}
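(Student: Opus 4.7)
The plan is to reduce the claim to Theorem \ref{thm:oldthm} via the equivalence with classical random linear network coding established in Lemma \ref{lem:equivalency}. Throughout I would fix the sampled active edge sequence $E_1,\ldots,E_{T'}$ drawn from $M$ and reason about a hypothetical execution of algebraic gossip on this very sequence.

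First I would set up an auxiliary dissemination instance in which a total of $K := \sum_{i=1}^k c_i$ distinct, uncorrelated messages are initially distributed on the network by placing $c_i$ messages at source $s_i$ for each $i$. Since $M$ floods in time $T$ with throughput $\alpha$, Theorem \ref{thm:oldthm} applies verbatim with its $k$ replaced by $K$: after $T' \geq T + \frac{1}{\alpha}(K + \log \eps^{-1})$ rounds, with probability at least $1-\eps$ every node -- and in particular the destination $d$ -- decodes all $K$ messages. In that event, random linear network coding has effectively delivered the $c_i$ packets originally held only by $s_i$ to $d$ for every $i$, on the sampled sequence $E_1,\ldots,E_{T'}$.

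Next I would invoke the forward direction of Lemma \ref{lem:equivalency} with the integer capacities $c_1,\ldots,c_k \geq 1$. Since each source $s_i$ successfully transmits $c_i$ packets to $d$ in the above execution, the equivalence produces, on the same sequence $E_1,\ldots,E_{T'}$, a set of valid weighted paths of weight one that realizes rate $c_i$ between $s_i$ and $d$ for every $i$. The combined collection is valid in the sense of the definition because the equivalence routes each packet along a unique path in the transformed hypergraph $G_{PNC}$, so each active edge is used by at most one weight-one path and the per-edge weight bound holds automatically.

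The hard part is essentially nonexistent: the analytical content is already carried by Theorem \ref{thm:oldthm} (the tail bound on the flooding process of knowledge) and by Lemma \ref{lem:equivalency} (the min-cut/network-coding achievability equivalence in $G_{PNC}$). The two points I would be careful about are that the $T'$ of this lemma matches exactly the $T'$ that Theorem \ref{thm:oldthm} requires for $K = \sum_i c_i$ arbitrarily distributed messages at error probability $\eps$, and that integrality of the $c_i$ is essential in order to invoke the (integer) forward direction of Lemma \ref{lem:equivalency} rather than only its real-valued converse through the min-cut region.
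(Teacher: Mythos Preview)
Your proposal is correct and follows essentially the same approach as the paper: place $c_i$ auxiliary messages at each source $s_i$, apply \Cref{thm:oldthm} with $K=\sum_i c_i$ messages to guarantee decoding at $d$ with probability $1-\eps$, and then invoke the integer forward direction of \Cref{lem:equivalency} to extract the required weight-one valid paths. Your added remarks about the field size being implicit in \Cref{thm:oldthm}, the per-edge validity of the resulting path collection, and the role of integrality are accurate elaborations of points the paper leaves terse.
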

\begin{proof}
We think of putting $c_i$ messages at node $s_i$ and run the standard algebraic gossip protocol for $T'$ rounds using the field size $q$ that achieves the parameters $T$ and $\alpha$ on $M$. \Cref{thm:oldthm} now shows that with probability $1 - \eps$ the sink $t$ can decode the messages. According to \Cref{lem:equivalency} this shows that there are $c_i$ mutually disjoint paths from $s_i$ to $d$ for every $i$ with weight one which achieve the desired capacities. 
\end{proof}

Note that the above lemma requires the capacities to be integral and thus essentially asks for the time until a certain number of mutually disjoint paths occur. While this is sufficient and optimal in the \emph{uncorrelated} information spreading setting this requirement can be a severe restriction.%make a drastic difference.

%{\bfseries Bernhard says: I have done changes up to here}

One setting where this makes a drastic difference is when we have $k$ sources and the total capacities needed sum up to less then one. This corresponds to asking for the time until there is a path from each of the sources to the sink -- without these paths having to be disjoint. If one considers for example the random phone call model with $n$ nodes and $k$ sources it takes in expectation $\log n + k$ time until a disjoint path between a node and each source appears while merely $\log n + \log k$ rounds are sufficient to get this for non-disjoint paths. 

The following lemma generalizes this observation and strengthens \Cref{lem:disjointcapacities} in this direction to give order optimal bounds for any set of fractional capacities:
\begin{lemma}\label{lem:fractional paths}
Let $M$ be a network model on a node set $V$ that floods in time $T$ with throughput $\alpha$.
For any $T'$, any $\eps >0$, any sink $d \in V$ and any set of $k$ source nodes $s_1, s_2, \ldots, s_k \in V$ with rates $c_1, c_2, \ldots, c_k >0$, if $T' \geq T + \frac{1}{\alpha}(\Ceil{\sum_i c_i} + \log k + \log \eps^{-1})$ then with probability at least $1 - \eps$ there exists a selection of valid weighted paths that achieve a capacity of $c_i$ between $s_i$ and $d$ for every $i$.
\end{lemma}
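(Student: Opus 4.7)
The plan is to invoke \Cref{lem:disjointcapacities} once per source with a large common integer capacity, and then glue the resulting single-source flows together by a uniform rescaling of weights.

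Let $C = \lceil \sum_{i=1}^k c_i \rceil$, and fix the field size $q$ that witnesses the parameters $T$ and $\alpha$ of $M$. First, for each source $s_i$ separately, I would apply \Cref{lem:disjointcapacities} with a single source carrying integer capacity $C$ and error parameter $\eps/k$. Since $T' \geq T + \frac{1}{\alpha}(C + \log(k/\eps))$, the lemma guarantees that with probability at least $1 - \eps/k$ the realized time-expanded network admits a valid collection $\mathcal P_i$ of weighted paths from $s_i$ to $d$ of total weight $C$. A union bound over the $k$ sources yields, with probability at least $1 - \eps$, that all of $\mathcal P_1, \ldots, \mathcal P_k$ exist simultaneously in the same realization of the active edge sets $E_1,\ldots,E_{T'}$.

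Second, I would combine the $\mathcal P_i$'s into a single valid weighted path selection by uniformly scaling the weight of every path in $\mathcal P_i$ by $c_i/C$. The total weight of the rescaled paths out of $s_i$ is $(c_i/C) \cdot C = c_i$, exactly the required capacity. For validity, pick any time-stamped edge $(u,v) \in E_t$: since $\mathcal P_i$ was already valid (so its paths placed weight at most $1$ on $(u,v)$ at time $t$), after the uniform rescaling the rescaled $\mathcal P_i$ contributes at most $c_i/C$ to this edge. Summing over the $k$ sources gives total weight at most $\sum_i c_i/C = (\sum_i c_i)/C \leq 1$ by the definition of $C$.

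The conceptual point worth highlighting, which I expect to be the main obstacle to guess, is why this gives only $\log k$ overhead rather than a linear $k$. A naive approach would simply apply \Cref{lem:disjointcapacities} globally with integer demands $\lceil c_i \rceil$, but $\sum_i \lceil c_i \rceil$ can be as large as $\lceil \sum c_i \rceil + k - 1$, producing a linear $k$ term in the time bound. The trick in the plan above is that demanding the common oversized integer capacity $C$ from each source in $k$ separate applications of \Cref{lem:disjointcapacities} does not inflate the main time parameter beyond $C = \lceil \sum c_i \rceil$; only the error tolerance absorbs the $k$-fold union bound, which enters the time as a $\frac{1}{\alpha}\log k$ correction. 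The deliberately overbuilt single-source flows of total weight $C$ then scale down cleanly to the target fractional capacities $c_i$, and edge validity is met with exactly the slack provided by $C \geq \sum_i c_i$.
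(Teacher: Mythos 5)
Your proposal is correct and follows essentially the same route as the paper: $k$ separate applications of \Cref{lem:disjointcapacities} with the single active source carrying integer capacity $\Ceil{\sum_i c_i}$ and error budget $\eps/k$, a union bound, and then rescaling each source's path collection by $c_i/\Ceil{\sum_j c_j}$ so that per-edge weights sum to at most $1$. The concluding discussion of why this yields a $\log k$ rather than a linear-in-$k$ overhead accurately identifies the point of the construction.
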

\begin{proof}
The idea is to combine $k$ applications of \Cref{lem:disjointcapacities} using a union bound and capacity sharing.We set the failure probability to $\eps/k$ and in the $i$th application of \Cref{lem:disjointcapacities} we set the $c_i$ to $\Ceil{\sum_i c_i}$ while setting all other capacities to zero. Out of this we get that for every $i$ with probability $1 - \eps/k$ the number of disjoint paths from $s_i$ to $d$ is at least $\Ceil{\sum_i c_i}$. Via a union bound we get that with probability of $1 - \eps$ all these paths are there. Note, that while the paths from each source are disjoint the paths starting at different sources may not be disjoint. We now take the union of these paths while choosing the weight of a paths starting at source $s_i$ to be $\frac{c_i}{\Ceil{\sum_j c_j}}$. This gives capacity of $c_i$ between $s_i$ and $d$. Furthermore, since any edge $e$ is used by at most one path going out from each source, we get that the total weight on $e$ summed over all paths is at most $\sum_{i} \frac{c_i}{\Ceil{\sum_j c_j}} \leq 1$.
\end{proof}

We will use \Cref{lem:fractional paths} to prove our main result about information dissemination with correlated data in oblivious networks. To state our result we need the following definition:
\begin{definition}[Slepian-Wolf region~\cite{SlepianWolf73}]
A capacity vector $c = (c_1,\ldots,c_k)$ is sufficient for $v \in V$ if and only if for every index subset $S \in [k]$ we have $\sum_{i \in S} c_i \geq H(X_{S} | X_{\overline{S}}, Y_v)$.
\end{definition}

Putting together \Cref{lem:fractional paths}, \Cref{lem:equivalency} and applying the results on network coding with correlated data from \cite{Ho_et_al06} and \cite{BakshiEffros08} in a black-box manner, we can now directly state our main result which generalizes and encompasses \Cref{thm: two sources no sideinformation} and \Cref{theo. one source}:
\begin{theorem}\label{thm:main-red-improved}
Suppose $M$ is an oblivious network model that allows floods in time $T$ with throughput $\alpha$.
Then, for any $\delta > 0$ and error probability $\eps > 0$, there exists an $l$ such that for any joint distribution of $X_1, \ldots, X_k$ and the $Y_{v}$'s, any packet size $s > 0$, any node $v$ and any capacity vector $(c_1,\ldots,c_k)$ that is sufficient for $v$, node $v$ will correctly decoding $x_1, \ldots, x_k$ with probability at least $1 - \eps$ after $T + \frac{1}{\alpha}(\Ceil{\frac{l}{s} \sum_{i \in [k]} c_i + \delta} + \log k + \log \eps^{-1} + \delta)$ rounds.
%Comment Bernhard: We can replace log k by \log |\{i \in [k] \ |\ c_i \neq 0\}|, i.e., the number of non-zero capacities but even so this is a bit more accurate and general it is harder to parse which is why I decided to stick with log k.
\end{theorem}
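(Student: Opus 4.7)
The plan is to compose the three pieces already in place: Lemma \ref{lem:fractional paths} supplies fractional path capacities between sources and the sink $v$ in the random edge sequence produced by $M$; Lemma \ref{lem:equivalency} lifts these into min-cut capacities in the transformed hypergraph $G_{PNC}$; and the correlated-source random-linear-network-coding theorems of \cite{Ho_et_al06,BakshiEffros08}, applied in $G_{PNC}$, then certify joint-typicality decoding at $v$ on top of the random binning already used at the sources.

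First I would fix $\delta,\eps>0$, work with the field size $q$ that realizes the flooding parameters $(T,\alpha)$ for $M$, and introduce auxiliary packet rates $c_i' = \tfrac{l}{s}\bigl(c_i + \tfrac{\delta}{k}\bigr)$. Slepian--Wolf sufficiency of $(c_1,\ldots,c_k)$ for $v$ immediately promotes these into a strict subset-sum slack: for every $S \subseteq [k]$, $\sum_{i \in S} c_i' \geq \tfrac{l}{s}\bigl(H(X_S \mid X_{\overline{S}}, Y_v) + \tfrac{\delta}{k}\bigr)$, so the subset-sum capacity lies above the Slepian--Wolf boundary by a fixed rate slack $\delta/k$, uniformly in $l$ and $S$. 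Applying Lemma \ref{lem:fractional paths} with these $c_i'$, sources $s_1,\ldots,s_k$, sink $v$ and error probability $\eps/2$ gives, with probability at least $1-\eps/2$, valid weighted paths achieving capacity $c_i'$ from $s_i$ to $v$ after $T + \frac{1}{\alpha}\bigl(\Ceil{\tfrac{l}{s}(\sum_i c_i + \delta)} + \log k + \log(2/\eps)\bigr)$ rounds, matching the round count in the statement up to absorbing constants into the trailing $\delta$.

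Next, Lemma \ref{lem:equivalency} translates the existence of such weighted paths into the statement that $(c_1',\ldots,c_k')$ lies in the min-cut region of $G_{PNC}$, and that the algebraic gossip output at $v$ coincides with the output of classical random linear network coding in $G_{PNC}$. Combined with the per-subset rate slack $\delta/k$ established above, the random-binning plus joint-typicality machinery of \cite{Ho_et_al06,BakshiEffros08} then decodes $x_1,\ldots,x_k$ at $v$ with probability at least $1-\eps/2$, provided $l$ is chosen large enough depending on $\delta,\eps,k$, and the joint distribution. A union bound over the two failure events closes the argument.

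The main obstacle I expect is the two-unit juggling of the slack $\delta$: it must simultaneously serve as a \emph{rate} slack in the source-coding step (controlling the typicality-decoding error via Lemma \ref{lem. rank only} and the Ho--Bakshi--Effros analysis) and as an additive \emph{packet-count} slack inside the ceiling of the stopping-time expression. The scaling $c_i' = \tfrac{l}{s}(c_i + \delta/k)$ is deliberately engineered so that the rate slack stays a fixed positive $\delta/k$ as $l\to\infty$, which is exactly what the Ho--Bakshi--Effros black box requires; had we instead demanded only a packet-level slack independent of $l$, the corresponding rate slack would vanish with $l$ and the decoder would lose its exponential concentration. A secondary but important point is that Lemma \ref{lem:equivalency} is stated for valid \emph{weighted} paths --- this is exactly why Lemma \ref{lem:fractional paths}, rather than its integral predecessor Lemma \ref{lem:disjointcapacities}, is the right tool, since rounding each $c_i$ up individually would produce an avoidable extra additive term of order $k$ in the stopping time.
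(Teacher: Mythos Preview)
Your proposal is correct and follows exactly the route the paper outlines: it composes \Cref{lem:fractional paths}, \Cref{lem:equivalency}, and the black-box results of \cite{Ho_et_al06,BakshiEffros08}, which is precisely the combination the paper announces (without further detail) immediately before stating the theorem. Your treatment is in fact more explicit than the paper's, in particular your handling of the slack $\delta$ via the per-source perturbation $c_i' = \tfrac{l}{s}(c_i + \delta/k)$ and your observation that a rate-level (rather than packet-level) slack is required for the typicality decoder; the paper leaves all of this implicit.
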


%%%%%%%%%%%%%%%%%%%%%%%%%%%%%%%%%%%%%%%%%%%%%%%%%%%%%%%%%%%%%%%%%%%%%%%%%%%%%%%%%%%%%%%%%%%%%%


\begin{thebibliography}{10}

\bibitem{SlepianWolf73}
D.~Slepian and J.~Wolf,
\newblock ``{N}oiseless coding of correlated information sources,''
\newblock {\em IEEE Trans. Inf. Theory}, vol. 19, no. 4, pp. 471--480, 1973.

\bibitem{AhlswedeKorner75}
R.~Ahlswede and J.~K\"orner,
\newblock ``{S}ource coding with side information and a converse for degraded
  broadcast channel,''
\newblock {\em IEEE Trans. Inf. Theory}, vol. 21, no. 6, pp. 629--637, 1975.

\bibitem{CohenAvestEffros_ISIT09}
A.~Cohen, S.~Avestimehr, and M.~Effros,
\newblock ``{O}n networks with side information,''
\newblock in {\em ISIT}, 2009, pp. 1343--1347.

\bibitem{Ho_et_al06}
T.~Ho, M.~M\'{e}dard, R.~Koetter, D.~R. Karger, M.~Effros, J.~Shi, and
  B.~Leong,
\newblock ``{A} random linear network coding approach to multicast,''
\newblock {\em IEEE Trans. Inf. Theory}, vol. 52, no. 10, pp. 4413--4430, 2006.

\bibitem{Barros_Servetto06}
J.~Barros and S.~D. Servetto,
\newblock ``{N}etwork information flow with correlated sources,''
\newblock {\em IEEE Trans. Inf. Theory}, vol. 52, no. 1, pp. 155--170, 2006.

\bibitem{BakshiEffros08}
M.~Bakshi and M.~Effros,
\newblock ``{O}n achievable rates for multicast in the presence of side
  information,''
\newblock in {\em ISIT}, 2008, pp. 1661--1665.

\bibitem{demers1987epidemic}
A.~Demers, D.~Greene, C.~Hauser, W.~Irish, J.~Larson, S.~Shenker, H.~Sturgis,
  D.~Swinehart, and D.~Terry,
\newblock ``Epidemic algorithms for replicated database maintenance,''
\newblock in {\em PODC}, 1987, pp. 1--12.

\bibitem{karp2000randomized}
R.~Karp, C.~Schindelhauer, S.~Shenker, and B.~Vocking,
\newblock ``Randomized rumor spreading,''
\newblock in {\em FOCS}, 2000, vol.~41, pp. 565--574.

\bibitem{Deb_Medard_Choute06}
S.~Deb, M.~M\'{e}dard, and C.~Choute,
\newblock ``{A}lgebraic gossip: a network coding approach to optimal multiple
  rumor mongering,''
\newblock {\em Trans. Networking}, vol. 14, pp. 2486--2507, 2006.

\bibitem{Mosk-Aoyamam_Shah06}
D.~Mosk-Aoyamam and D.~Shah,
\newblock ``{I}nformation dissemination via network coding,''
\newblock in {\em ISIT}, 2006, pp. 1748--1752.

\bibitem{Borokhovich2010Tight}
M.~Borokhovich, C.~Avin, and Z.~Lotker,
\newblock ``Tight bounds for algebraic gossip on graphs,''
\newblock in {\em ISIT}, 2010, pp. 1758--1762.

\bibitem{haeupler2011analyzing}
B.~Haeupler,
\newblock ``Analyzing network coding gossip made easy,''
\newblock in {\em STOC}, 2011, pp. 293--302.

\bibitem{NCdynamicnetworks11}
B.~Haeupler and D.~Karger,
\newblock ``Faster information dissemination in dynamic networks via network
  coding,''
\newblock in {\em PODC}, 2011, pp. 381--390.

\bibitem{NCgossipwithlimitedmemory}
B.~Haeupler and M.~Medard,
\newblock ``One packet suffices - highly efficient packetized network coding
  with finite memory,''
\newblock in {\em ISIT}, 2011, pp. 1151 --1155.

\bibitem{DBLP:conf/podc/AvinBCL11}
C.~Avin, M.~Borokhovich, K.~Censor-Hillel, and Z.~Lotker,
\newblock ``Order optimal information spreading using algebraic gossip,''
\newblock in {\em PODC}, 2011, pp. 363--372.

\bibitem{clementi2008flooding}
A.~Clementi, C.~Macci, A.~Monti, F.~Pasquale, and R.~Silvestri,
\newblock ``Flooding time in edge-markovian dynamic graphs,''
\newblock in {\em PODC}, 2008, pp. 213--222.

\bibitem{NCoptimality}
B.~Haeupler, M.~Kim, and M.~Medard,
\newblock ``{Optimality of Network Coding with Buffers},''
\newblock in {\em ITW}, 2011, pp. 533--537.

\end{thebibliography}
\end{document}